\documentclass{ifacconf}

\usepackage{graphicx}      % include this line if your document contains figures
\usepackage[numbers]{natbib}        % required for bibliography

\usepackage{multicol}
\usepackage{amsmath} % assumes amsmath package installed
\usepackage{amssymb}  % assumes amsmath package installed
\usepackage{mathtools}
\usepackage{tablefootnote}
\usepackage{optidef}
\usepackage{caption}
\usepackage{subcaption}
\usepackage{float}
\usepackage{threeparttable}
\usepackage{tablefootnote}

\usepackage{siunitx}

\usepackage{epstopdf}
\usepackage{multirow}
\usepackage[leftcaption]{sidecap}
\usepackage{booktabs}
\usepackage{arydshln}
\usepackage{lipsum}
\usepackage{comment}
\usepackage[dvipsnames]{xcolor}
\usepackage{enumitem}
\usepackage[nolist]{acronym} % acronyms by the \ac{label} command

\newacro{vic}[VIC]{Variable Impedance Control}
\newacro{nn}[NN]{Neural Network}
\newacro{dnn}[DNN]{Deep Neural Network}
\newacro{cnn}[CNN]{Convolutional Neural Network}
\newacro{penn}[PE]{Probabilistic ensemble NN}
\newacro{mpc}[MPC]{Model Predictive Control}
\newacro{mpic}[MPIC]{Model Predictive Impedance Control}
\newacro{pi2}[PI\textsuperscript{2}]{Policy Improvement with Path Integrals}
\newacro{dmp}[DMP]{Dynamic Movement Primitive}
\newacro{seds}[SEDS]{Stable Estimator of Dynamical Systems}
\newacro{ilc}[ILC]{Iterative  learning  control}
\newacro{gmm}[GMM]{Gaussian Mixture Model}
\newacro{gmr}[GMR]{Gaussian Mixture Regression}
\newacro{gp}[GP]{Gaussian Processes}
\newacro{ppc}[PPC]{Passivity-Preservation Control}
\newacro{tpgmm}[TP-GMM]{Task-Parameterized \ac{gmm}}
\newacro{lfd}[LfD]{Learning from Demonstration}
\newacro{lfhd}[LfD]{Learning-from-human-Demonstration}
\newacro{il}[IL]{Imitation Learning}
\newacro{wls}[WLS]{weighted least-squares}
\newacro{spd}[SPD]{Symmetric Positive Definite}
\newacro{emg}[EMG]{Electromyography}
\newacro{vil}[VIL]{Variable Impedance Learning}
\newacro{vilc}[VILC]{Variable Impedance Learning Control}
\newacro{ai}[AI]{Artificial Intelligent}
\newacro{gadmp}[Ga-DMP]{Geometry-aware Dynamic Movement Primitives}
\newacro{promp}[ProMP]{Probabilistic Movement Primitives}
\newacro{kmp}[KMP]{Kernelized Movement Primitives}
\newacro{dof}[DoF]{Degree of Freedom}
\newacro{msd}[MSD]{virtual-Mass Spring-Damper}
\newacro{gadmp}[GaDMP]{Geometry-aware \ac{dmp}}
\newacro{mbrl}[MBRL]{Model-based Reinforcement Learning}
\newacro{ic}[IC]{Impedance Control}
\newacro{rl}[RL]{Reinforcement Learning}
\newacro{hrc}[HRC]{Human-Robot Collaboration}
\newacro{hri}[HRC]{Human-Robot Interaction}
\newacro{pets}[PETS]{probabilistic ensembles with trajectory sampling}
\newacro{cem}[CEM]{Cross Entropy Method}
\newacro{dcem}[DCEM]{Differentiable Cross Entropy Method}
\newacro{mdp}[MDP]{Markov Decision Process}
\newacro{dpg}[DPG]{Deterministic Policy Gradient}
\newacro{spg}[SPG]{Stochastic Policy Gradient}
\newacro{td}[TD]{Temporal Difference}
\newacro{ocp}[OCP]{Optimal Control Problem}
\newacro{ac}[AC]{Actor-Critic}
\newacro{lstd}[LSTD]{least-squares temporal-difference learning algorithm}
\newacro{ift}[IFT]{Inverse Function Theorem}
\newacro{kkt}[KKT]{Karush-Kuhn-Tucker}
\newacro{ppo}[PPO]{Proximal Policy Optimization}

\newcommand{\figsref}[1]{Figures~\hyperref[#1]{\ref*{#1}}}
\newcommand{\Figref}[1]{Figure~\hyperref[#1]{\ref*{#1}}}

\newcommand{\tabref}[1]{Tab.~\hyperref[#1]{\ref*{#1}}}

\newcommand{\algoref}[1]{Algorithm~\hyperref[#1]{\ref*{#1}}}

\newlength{\Oldarrayrulewidth}

\definecolor{darkgreen}{rgb}{0.0,0.49,0.19}

\graphicspath{{figures/}}

\newtheorem{theorem}{Theorem}

\newif\ifdraft
\draftfalse % or 

\newcommand{\vect}[1]{\ensuremath{\boldsymbol{\mathrm{#1}}}}

\begin{document}
\begin{frontmatter}

%\title{MDPs with Future Information: How to Deliver Optimal Policies with MPC?} 
%\title{Solving Markov Decision Processes with Future Information via Model Predictive Control}
\title{Solving Markov Decision Processes with Future Information via MPC}
% Title, preferably not more than 10 words.

\thanks[footnoteinfo]{This work was supported by the Research Council of Norway through the project \emph{Safe Reinforcement Learning using MPC}.}
\thanks{© 2026 the authors. This work has been accepted to IFAC for publication under a Creative Commons Licence CC-BY-NC-ND.}

\author[First]{Shambhuraj Sawant} 
\author[First]{Akhil S Anand} 
\author[First]{Dirk Reinhardt}
\author[First]{Sebastien Gros}

\address[First]{Norwegian University of Science and Technology (NTNU), Trondheim, Norway. E-mail:{\tt\small \{shambhuraj.sawant, akhil.s.anand, dirk.p.reinhardt, sebastien.gros\}@ntnu.no}}

\begin{abstract}                % Abstract of 50--100 words
Model Predictive Control (MPC) is widely used in industrial and robotic systems for enforcing constraints and embedding domain knowledge through finite-horizon optimization-based planning. 
However, despite these strengths, an MPC scheme typically does not yield optimal policies for sequential decision-making problems formulated as Markov Decision Processes (MDPs). 
Recent combinations of MPC with Reinforcement Learning (RL) alleviate this issue by treating MPC as a parameterized model of the optimal policy of an MDP and adjusting its parameters using data.
While these approaches typically consider classical MDPs, many real-world problems include future information—such as forecasts, prices, or reference trajectories—at decision time, which must be included in the MDP state for optimal decision-making.
Current MPC-RL approaches do not directly account for this augmented-state structure, raising the question of how to incorporate future information into MPC to obtain an optimal policy.
This work establishes the structural requirements under which a parameterized MPC can exactly represent the optimal value functions and policy of an MDP with future information. 
We further demonstrate that such a parameterized MPC can serve as a structured function approximator, with its parameters learned using RL. 
The approach is illustrated on a point-mass racing task with future reference information.

% Model Predictive Control (MPC) is widely used in industrial and robotic systems due to its ability to enforce constraints and embed domain knowledge through finite-horizon optimization-based planning. 
% However, despite these strengths, an MPC scheme typically does not yield optimal policies for sequential decision-making problems formulated as Markov Decision Processes (MDPs). 
% Recent combinations of MPC with Reinforcement Learning (RL) help alleviate this issue by treating MPC as a parameterized model of the optimal policy of an MDP and adjusting its parameters using data.
% While these approaches typically consider the classical MDP setting, many real-world problems include relevant future information—such as forecasts, future prices, or reference trajectories—at decision time, which must be included in the MDP state for optimal decision making.
% Current MPC-RL approaches do not directly account for this augmented state structure, raising the question of how to incorporate future information into an MPC scheme to obtain an optimal policy.
% This work establishes the structural requirements under which a parameterized MPC can exactly represent the optimal value functions and policy of an MDP with future information. 
% We further demonstrate that a parameterized MPC scheme can serve as a structured function approximator, with its parameters learned using RL. 
% The approach is illustrated on a point-mass racing task with future reference information.
\end{abstract}

\begin{keyword}
% Five to ten keywords, preferably chosen from the IFAC keyword list.
Markov Decision Process, Model Predictive Control, Reinforcement Learning.
\end{keyword}

\end{frontmatter}
%===============================================================================

\section{Introduction} \label{sec:intro}
\ac{mpc} is a widely adopted methodology for sequential decision-making in industrial and robotic systems, as it can satisfy constraints and performance objectives through finite-horizon optimization~\citep{rawlings2017model}.
Its strength lies in repeatedly solving an \ac{ocp} in a receding-horizon fashion, enabling constraint enforcement and the systematic incorporation of domain knowledge.

An \ac{mpc}, when viewed as a solution method for a \ac{mdp} modeling the sequential decision-making problem, typically provides only an approximate solution.
% However, when viewed as a solution method for a \ac{mdp} modeling the underlying sequential decision-making problem, an \ac{mpc} typically provides only an \emph{approximate} solution. 
%real-world sequential decision-making problems, typically modeled as an \ac{mdp}, a standard \ac{mpc} scheme provides only an \emph{approximate} solution.
Despite strong empirical performance, an \ac{mpc} rarely yields \emph{optimal} decisions, as it relies on several simplifications to keep the optimization problem tractable.

Learning-based \ac{mpc} methods attempt to mitigate these limitations by adapting or correcting \ac{mpc} formulations using data.
Approaches such as Gaussian-process \ac{mpc}~\citep{hewing2020gpmpc} and Bayesian \ac{mpc}~\citep{wabersich2020bmpc} focus on improving predictive accuracy or quantifying model uncertainty, with broader surveys in~\citep{hewing2020learning, mesbah2022fusion}.
Alternatively, an \ac{mpc} can be viewed as a model of an \ac{mdp} which can deliver optimal value functions and policy even with an inaccurate dynamics model \citep{gros2019data, reinhardt2025economic}.
This perspective motivates the use of a parameterized \ac{mpc} as a structured function approximator to solve \acp{mdp}. 
Recent approaches combine the structured function-approximation capabilities of \ac{mpc} with \ac{rl} methods to improve closed-loop performance, viz. Actor–Critic \ac{mpc}~\citep{romero2024actor}, and an overview of \ac{mpc}-\ac{rl} combination approaches appears in~\citep{reiter2025synthesismodelpredictivecontrol}.

The existing results in \citep{gros2019data, reinhardt2025economic} assume the classical \ac{mdp} setting, where decisions depend solely on the current physical state of the system. 
In contrast, many real-world problems include future information at decision time—such as weather-dependent renewable energy forecasts, fluctuating electricity prices in smart grids, or reference trajectories for autonomous racing—that is relevant to optimal decision-making.
In such problems, the available future information must be incorporated into the state definition to satisfy the \emph{Markov} property~\citep{powell2019rlso, de2021incorporating}.
Existing results in~\citep{gros2019data} do not trivially extend to such \acp{mdp} with future information. 
In contrast, classical \ac{mpc} extensions incorporate future information in a stage-wise or heuristic manner, without explicitly accounting for the augmented state required by the underlying \ac{mdp}, and may fail to optimally exploit the available information.
This raises the key question of how an \ac{mpc} should be formulated to represent the optimal policy of an \ac{mdp} with future information.
% This mismatch raises a key question: how should an \ac{mpc} be formulated to represent the optimal policy of an \ac{mdp} with future information?

In this work, we address the structure and parameterization required for an \ac{mpc} to deliver an optimal policy for an \ac{mdp} with future information. % and further provide an analysis of the implications for learning the \ac{mpc} parameterization using \ac{rl} techniques.
% This paper addresses this gap by extending results in \citep{gros2019data} to \ac{mdp}s with future information and analyzing the implications for \ac{mpc} formulation and for learning the \ac{mpc} parameterization using \ac{rl} techniques.
Our contributions are:
\begin{itemize}
\item We establish conditions under which an \ac{mpc} can represent the optimal value functions and optimal policy for an \ac{mdp} with future information.
\item We propose a practical parameterized \ac{mpc} formulation that incorporates future information and can be combined with \ac{rl} to learn its parameters.
\item We demonstrate the proposed formulation on a point-mass racing task with future reference information using \ac{ppo}~\citep{schulman2017proximal} to learn the \ac{mpc} parameters.
% We demonstrate the proposed formulation on a point-mass racing task with future reference information, where the \ac{mpc} parameters are learned using \ac{ppo}~\citep{schulman2017proximal}.
% We demonstrate the proposed formulation on a point-mass racing task with future reference information, where the proposed \ac{mpc} learned using \ac{ppo}~\citep{schulman2017proximal} exploits forecasts more effectively than both classical \ac{mpc} and standard parameterized \ac{mpc} scheme learned using RL.
\end{itemize}

The article is organized as follows.
Section~\ref{sec:background} reviews \acp{mdp}, \acp{mpc}, and their treatment of future information.
Section~\ref{sec:mpc_optimal_policy} formulates the augmented-state \ac{mpc} scheme and establishes conditions under which it can represent the optimal value functions and policy of an \ac{mdp} with future information.
Section~\ref{sec:mpc_fn_approx} presents a practical parameterized formulation, discusses its use within an \ac{rl} pipeline, and illustrates the approach on a point-mass racing task.
Section~\ref{sec:conclusion} concludes the paper.
% The article is organized as follows.
% Section \ref{sec:background} provides background on \acp{mdp} and \acp{mpc}, and how each framework handles future information.
% Section \ref{sec:mpc_optimal_policy} introduces our proposed \ac{mpc} formulation for solving \acp{mdp} with future information and presents the main theoretical result: the conditions under which an \ac{mpc} yields the optimal value functions and policy for an \ac{mdp} when future information is available.
% Section \ref{sec:mpc_fn_approx} then discusses how the proposed parameterized \ac{mpc} can be formulated in practice and combined with \ac{rl} methods, and includes an illustrative example, with Section \ref{sec:conclusion} providing the conclusion. 

\section{Background}\label{sec:background}
In this section, we provide an overview of \acp{mdp}, \acp{mpc}, outline how an \ac{mpc} can approximate optimal policies, and then discuss their treatment of future information.

\subsection{Markov Decision Processes}
An \ac{mdp} provides a mathematical framework for modeling discrete-time sequential decision-making under uncertainty~\citep{puterman2014markov}.
It is defined over a state space $\mathcal{S}$ and an action space $\mathcal{A}$, with a transition kernel $\mathcal{P}$ giving the conditional probability (or density) of an action $\vect a\in\mathcal{A}$ taken in a state $\vect s\in \mathcal{S}$ leading to a next state $\vect s_+\in\mathcal{S}$, 
\begin{equation}
	\mathcal{P}[\,\vect s_+\,|\,\vect s,\vect a\,], \label{eq:tf1}
\end{equation}
and a stage cost, $L$.
Here, the state $\vect s$ is \emph{Markovian}, i.e., state $\vect s$ together with the transition kernel $\mathcal{P}$ provides complete statistics for the distribution of the successor-state $\vect s_+$ for any action $\vect a$.
%Note that, typically, \eqref{eq:tf1} is given as $\vect s_+= \vect f(\vect s, \vect a, \vect w)$ in the control literature with a possibly nonlinear function $\vect f$ and a random disturbance $\vect w$ drawn from a distribution $\mathcal{W}$. At the same time, the cost $L$ is analogous to the reward function in RL literature.
A policy function $\vect \pi$ then gives an action $\vect a$ to execute in a state $\vect s$, i.e. $\vect \pi: \mathcal{S}\rightarrow\mathcal{A}$. 
% It can be deterministic, mapping a state $\vect s$ to an action $\vect a$, i.e. $\vect \pi: \mathcal{S}\rightarrow\mathcal{A}$, or stochastic, specifying a conditional distribution over actions given a state $\vect s$, i.e. $\vect\pi: \mathcal S \rightarrow \mathcal P(\mathcal A)$.

Solving an \ac{mdp} consists of finding an \emph{optimal} policy that minimizes the objective $J$, typically given as,
\begin{equation}
J(\vect \pi) =\ \mathbb E_{\tau \sim (\vect \pi, \mathcal{P})}\left [\left.\sum_{t=0}^\infty \gamma^t L(\vect s_t,\vect a_t)\,\right|\,\vect s_0 \sim \rho_0 \,\right],\label{eq:j1}
\end{equation}
where, $\tau=\{\vect s_0, \vect a_0, \vect s_1, \vect a_1, ...\}$ denotes the trajectory generated by a policy $\vect \pi$ under $\mathcal{P}$, $\rho_0$ is the given initial state distribution, $\gamma \in (0, 1]$ is the discount factor, and $t$ denotes the time index.
% In \eqref{eq:j1}, $\mathbb E_{\tau \sim (\vect \pi, \mathcal{P})}[.]$ denotes expectation taken over the Markov chains resulting from following a policy $\vect \pi$ under the transition dynamics $\mathcal{P}$.
Then the optimal policy is, 
\begin{equation}
    \vect \pi^\star = \arg\min_{\vect \pi} J(\vect \pi).\label{eq:optpi}
\end{equation}
% In this work, we assume that such an \ac{mdp} yields a well-posed problem, i.e., it has an optimal policy defined over a part of its state space $\mathcal{S}$.

Finding an optimal policy typically involves computing the value functions, namely the state value function $V^{\vect \pi}$ and the state–action value function $Q^{\vect \pi}$, associated with a policy $\vect \pi$~\citep{sutton2018reinforcement}.
Here, $V^{\vect \pi}$ denotes the expected cumulative cost when starting from a state $\vect s$ and following a policy $\vect \pi$, and $Q^{\vect \pi}$ denotes the expected cumulative cost when taking an action $\vect a$ in a state $\vect s$ and thereafter following $\vect \pi$,
\begin{align}
	V^{\vect \pi}(\vect s) &= \mathbb E_{\tau \sim (\vect \pi, \mathcal{P})}\left [\left.\sum_{t=0}^\infty \gamma^t L(\vect s_t,\vect a_t)\,\right|\,\vect s_0 = \vect s \,\right], \\
	Q^{\vect \pi}(\vect s, \vect a) &= L(\vect s, \vect a) + \gamma \mathbb{E}_ {\mathcal{P}} \left[\left.V^{\vect \pi}(\vect s_+)\,\right|\,\vect s, \vect a \right].
\end{align} 
We label $Q^\star$ as the optimal state-action value function and $V^\star$ as the optimal state value function associated with $\vect \pi^\star$. The optimal value functions and policy are related through Bellman equations~\citep{sutton2018reinforcement}.
% \begin{align}
%     Q^\star (\vect s, \vect a) &= L(\vect s, \vect a)+\gamma \mathbb{E}_ {\mathcal{P}}\left[\left. V^\star(\vect s_+) \right| \vect s, \vect a \right] \\
%     V^\star (\vect s) &= Q^\star(\vect s, \vect \pi^\star(\vect s)) = \min_{\vect a} Q^\star(\vect s, \vect a)
% \end{align}

\subsection{Model Predictive Control}
An \ac{mpc} repeatedly solves a finite-horizon \ac{ocp} using a dynamics model to produce constrained control sequences, possibly subjected to constraints~\citep{rawlings2017model}. 
For a state $\vect s$, an \ac{mpc} scheme is typically cast as,
\begin{mini!}|s|
	{\vect u, \vect x}
	{\gamma^N T(\vect x_N) + \sum_{k=0}^{N-1} \gamma^k L(\vect x_k, \vect u_k)}
	{\label{eq:mpc1}}{}
	\addConstraint{\vect x_{k+1}}{= \vect f(\vect x_k, \vect u_k)}
	\addConstraint{\vect h(\vect x_k, \vect u_k)}{\leq 0}
	\addConstraint{\vect h^f(\vect x_N)}{\leq 0}
	\addConstraint{\vect x_0}{= \vect s}
\end{mini!}
where $L$ is the stage cost, $T$ is the terminal cost, $\vect f$ is the dynamics model approximating \eqref{eq:tf1}, and $\vect h, \vect h^f$ are the constraints of the controlled system, with the prediction horizon $N$ and the discount factor $\gamma$.
% In \eqref{eq:mpc1}, $\vect u$ and $\vect x$ collect the control input sequences and the state predictions, respectively, i.e. $\vect u = \{\vect u_0, \dots, \vect u_{N-1}\}$, $\vect x = \{\vect x_0, \dots, \vect x_{N}\}$. 
The solution of \eqref{eq:mpc1} yields an optimal control sequence $\vect u^\star$. % and state predictions $\vect x^\star$. 
An \ac{mpc} operates in a receding-horizon manner, applying only $\vect u_0^\star$ and resolving at the next state.
% An \ac{mpc} operates in a receding-horizon manner, i.e., only the first element of $\vect u^\star$, $\vect u_0^\star$, is applied to the system, and \eqref{eq:mpc1} is solved again for the resulting state. 
Then an \ac{mpc} policy is, % given as,
\begin{align}\label{eq:pi1}
	\vect \pi^{\textrm{MPC}}(\vect s) = \vect u_0^{\star}.
\end{align}
% where $\vect u_0^\star$ is obtained from solving \eqref{eq:mpc1}.
%The hope here is that $\vect \pi^{\textrm{MPC}}$ approximates closely $\vect \pi^\star$. 
The objective is for $\vect \pi^{\mathrm{MPC}}$ to approximate the optimal policy $\vect \pi^\star$ in \eqref{eq:optpi}. Next, we briefly discuss when $\vect \pi^{\textrm{MPC}}$ can deliver $\vect \pi^\star$ to yield a solution to an \ac{mdp}.

\subsection{\ac{mpc} as a solution to \ac{mdp}}
% Bottlenecks in MPC delivering MDP solution
In practice, $\vect \pi^{\textrm{MPC}}$ delivers satisfactory performance and enforces constraints provided that $\vect f$ adequately represents \eqref{eq:tf1} and $T$ and $N$ are appropriately chosen. 
% However, $\vect \pi^{\textrm{MPC}}$ often falls short of delivering optimal decisions due to the several simplifications made in formulating its optimization scheme from the original \ac{mdp} objective \eqref{eq:j1}, namely, the finite-horizon formulation, optimization over fixed input sequences $\vect u$ rather than a policy, and the inevitable inaccuracies in $\hat{\vect f}$. 
However, $\vect \pi^{\textrm{MPC}}$ often falls short of optimality due to simplifications in its formulation relative to \eqref{eq:j1}, namely, the finite-horizon formulation, optimization over fixed input sequences $\vect u$ rather than a policy, and approximating stochastic dynamics via a deterministic model $\vect f$.
While these simplifications are necessary for a tractable optimization problem, selecting $\vect f$ to achieve good closed-loop performance is particularly challenging. 
% Many real systems are complex to model accurately, and deterministic models are intrinsically inadequate in stochastic settings. Even when a stochastic dynamics model is used, \ac{mpc} does not fully capture the structure of the underlying \ac{mdp}, since it still optimizes over open-loop input sequences rather than a policy function.
% Many learning-based MPC approaches therefore seek to improve $\hat{f}$ from data using machine learning tools to capture uncertainty in the underlying dynamics, such as Gaussian Process MPC \cite{hewing2020gpmpc}, which models the transition dynamics through a predictive distribution over next-state transitions.
Models suited for decision-making generally differ from those optimized for prediction accuracy, except for specific classes of problems~\citep{anand2024data, anand2025all}. % as strictly dissipative problems \cite{anand2024data}. 
Taken together, these simplifications often cause an \ac{mpc} to fail to deliver an optimal policy for an \ac{mdp}. %thereby motivating the need to adapt the \ac{mpc} formulation to recover optimal decision-making performance.

Alternatively, an \ac{mpc}, even with an inaccurate dynamics model, can deliver an optimal policy for an \ac{mdp} if its stage and terminal costs are suitably modified~\citep{gros2019data}.
However, computing the required cost modifications is as complex as obtaining an accurate representation of \eqref{eq:tf1} and solving the underlying \ac{mdp}. 
Hence, \citep{gros2019data} further proposed parameterizing the \ac{mpc} scheme as a whole and learning its parameters using \ac{rl} techniques to optimize its closed-loop performance.
% Moreover, extensions of \ac{mpc} within an \ac{rl} framework have addressed settings without explicit state definitions by leveraging past input–output data for decision-making~\citep{sawant2023model}.
% These approach can then be viewed as i) adjusting \ac{mpc} parameters using \ac{rl}, or ii) using \ac{mpc} as a function approximator within an \ac{rl} pipeline to approximate optimal policy and value functions.
% Importantly, \citep{gros2019data} proposed updating the parameters to improve closed-loop performance rather than focusing on dynamics model accuracy, i.e., treating \ac{mpc} as a solution to an \ac{mdp}. 
This approach leads to the following parameterized \ac{mpc} for approximating a state–action value function,
\begin{mini!}|s|
	{\vect u, \vect x}
	{\gamma^N T_{\vect \theta}(\vect x_N) + \sum_{k=0}^{N-1} \gamma^k L_{\vect \theta}(\vect x_k, \vect u_k)}
	{\label{eq:mpc2}}
	{Q^{\mathrm{MPC}}_{\vect \theta}(\vect s, \vect a) =}
	\addConstraint{\vect x_{k+1}}{= \vect f_{\vect \theta}(\vect x_k, \vect u_k)}
	\addConstraint{\vect h_{\vect \theta}(\vect x_k, \vect u_k)}{\leq 0}
	\addConstraint{\vect h^f_{\vect \theta}(\vect x_N)}{\leq 0}
	\addConstraint{\vect x_0}{= \vect s}
	\addConstraint{\vect u_0}{= \vect a}
\end{mini!}
where $\vect \theta$ collects all the parameters introduced to the costs, dynamics, and constraints. 
The corresponding state value function and the \ac{mpc} policy are then given as, 
\begin{align}
    V^{\mathrm{MPC}}_{\vect \theta}(\vect s) =& \min_{\vect a} Q^{\mathrm{MPC}}_{\vect \theta}(\vect s, \vect a),\\
    \vect \pi^{\mathrm{MPC}}_{\vect \theta}(\vect s) =& \arg\min_{\vect a} Q^{\mathrm{MPC}}_{\vect \theta}(\vect s, \vect a) = \vect u^\star_0\,.
\end{align}
% where $\vect u_0$ is the solution of \eqref{eq:mpc2}. 
For brevity, we write $V_{\vect \theta}, Q_{\vect \theta}, \pi_{\vect\theta}$ to represent \ac{mpc}-based value functions and policy. %, $V^{\mathrm{MPC}}_{\vect \theta}, Q^{\mathrm{MPC}}_{\vect \theta}, \pi^{\mathrm{MPC}}_{\vect\theta}$.

Next, we discuss how future information is handled in \acp{mpc} and \acp{mdp}.
% Next, we briefly discuss the handling of future information in \acp{mpc} and \acp{mdp} when available.

\subsection{Future information in MPC and MDP}\label{sec:gen_mdp} 
In many real-world decision-making problems, partial information about the future, such as forecasts or reference trajectories, is available at decision time and can significantly improve decision quality.
%For example, if a production or sales forecast for the next period is known, it should be considered for the current decision, as it directly influences the optimal choice today. 
% Such future information must be explicitly accounted for when formulating control policies.
In the following section, we first review how \ac{mpc} typically handles available future information and then contrast it with the \ac{mdp} viewpoint. %, where future information must be incorporated in a way that preserves the Markov property. 

% Let $\vect z_{t, i}$ denote the available information for the $i$-th step in the future from the current time $t$, i.e., for the future time instance $i+t$. 
Let $\vect z_{t, i}$ denote the information available at time $t$ for the future time $t+i$.
% Then, \eqref{eq:mpc1} is typically trivially extended to account for $\vect z_{t,i}$ as,
Then, \eqref{eq:mpc1} is typically extended in a stage-wise manner as,
\begin{mini!}|s|
	{\vect u, \vect x}
	{\gamma^N T(\vect x_N, \vect z_{t, N}, \dots) + \sum_{k=0}^{N-1} \gamma^k L(\vect x_k, \vect u_k, \vect z_{t, k})}
	{\label{eq:mpc3}}{}
	\addConstraint{\vect x_{k+1}}{= \vect f(\vect x_k, \vect u_k, \vect z_{t, k})}
	\addConstraint{\vect h(\vect x_k, \vect u_k, \vect z_{t, k})}{\leq 0}
	\addConstraint{\vect h^f(\vect x_N, \vect z_{t, N})}{\leq 0}
	\addConstraint{\vect x_0}{= \vect s_t}
\end{mini!}
% where the terminal cost $T$ accounts for the remaining future information sequence, and modeling such a cost can get challenging.
% where the terminal cost $T$ accounts for the remaining future information beyond the prediction horizon, which can be difficult to model accurately.
where the terminal cost $T$ is expected to account for the effect of future information beyond the prediction horizon, which can be difficult to model accurately.

In an \ac{mdp}, the state should be \emph{Markovian} and contain all information necessary for optimal decision-making~\citep{puterman2014markov}. 
When future information is available, it can be interpreted as a latent representation of complex real-world data, condensed into information relevant for decision-making.
Defining an \ac{mdp} in such settings therefore necessitates an augmented state definition~\citep{powell2019rlso, de2021incorporating}.
Moreover, future information evolves according to its own stochastic dynamics and update mechanisms; for example, weather forecasts are periodically revised in response to new observations.

Consider the augmented state at time $t$ as, 
\begin{align} \label{eq:aug_state}
    \Tilde{\vect s}_t := [\,\vect s_t,\ \vect Z^H_{t,0}\,]^\top ,\, \Tilde{\vect s}_t \in \Tilde{\mathcal{S}},
\end{align}
where $\Tilde{\mathcal{S}}$ denotes the augmented state space and $\vect Z^H_{t,i} = [\vect z_{t,i}, \dots, \vect z_{t, i+H}]$ indicates future information sequence at $k$-th stage with $H$ as the length of relevant future information. 
In principle, $H$ may be arbitrarily large, but future information typically becomes less informative as its statistics regress toward their steady-state distribution over time. 
Thus, beyond a problem-dependent length, the choice of $H$ is a modeling decision, and only the relevant portion needs to be included.
% In principle, $H$ may be arbitrarily large, but future information typically becomes less informative as its statistics regress toward their steady state distribution over time; hence, only the relevant portion needs to be included. % arbitrary choice
% In principle, $H$ may be arbitrarily large, but as such information statistics typically regress to their base statistics over time and become uninformative for the current decision, only a relevant portion needs to be included. % Importantly, $H$ must be long enough to contain all future information on which the optimal decision depends.
The transition kernel $\Tilde{\mathcal{P}}$ then governs the evolution of $\Tilde{\vect s}_t$; it includes both the transition of the physical state $\vect{s}_t$ and the stochastic update of the future information $\vect Z^H_{t,0}$. 
With $\Tilde{\vect s}_t$ and $\Tilde{\mathcal{P}}$, an \ac{mdp} with future information can be defined for the underlying stochastic decision process. 
All associated value functions and policies for such an \ac{mdp} are then defined over $\Tilde{\vect s}_t$. 

% In the next section, we formulate an \ac{mpc} scheme over the augmented state and establish conditions under which it can yield optimal decisions for \acp{mdp} with future information.
% In the next section, we formulate an augmented-state \ac{mpc} scheme and establish conditions under which it can yield optimal decisions for \acp{mdp} with future information.
This motivates formulating \ac{mpc} directly over the augmented state, as developed in the next section.

\section{Optimal policies with MPC}\label{sec:mpc_optimal_policy}
In this section, we establish conditions under which a parameterized \ac{mpc} can yield optimal value functions and an optimal policy for an \ac{mdp} with future information.

% Including future information in the state is necessary for optimal decisions.
The \ac{mpc} in \eqref{eq:mpc3} incorporates future information in a stage-wise manner.
Such a formulation may be sufficient in special cases where the dynamics model is exact, and the terminal cost fully accounts for the remaining future information.
In practice, however, this requires accurately modeling the dynamics and capturing, through the terminal cost, the effect of future information beyond the prediction horizon; otherwise, an optimality gap may arise.

To address this, we build on the perspective in \citep{gros2019data}, treating \ac{mpc} as a parameterized model of the value functions and policy of an \ac{mdp} with future information.
Since the Markov property requires an augmented state, any \ac{mpc} intended to represent the optimal policy must also operate on $\Tilde{\vect s}$.
This augmented-state construction aligns the \ac{mpc} recursion with the Bellman recursion, which is essential for extending the optimality argument of \citep[Theorem~1]{gros2019data}.
This leads to the following parameterized \ac{mpc} scheme,
\begin{mini!}|s|
	{\vect u, \Tilde{\vect x}}
	{\gamma^N T_{\vect \theta}(\Tilde{\vect x}_N) + \sum_{k=0}^{N-1} \gamma^k L_{\vect \theta}(\Tilde{\vect x}_k, \vect u_k)}
	{\label{eq:mpc4}}
	{Q_{\vect \theta}(\Tilde{\vect s}_t, \vect a) =}
	\addConstraint{\Tilde{\vect x}_{k+1}}{= \Tilde{\vect f}(\Tilde{\vect x}_k, \vect u_k) \label{eq:mpc4:2}}
	\addConstraint{\vect h_{\vect \theta}(\Tilde{\vect x}_k, \vect u_k)}{\leq 0}
	\addConstraint{\vect h^f_{\vect \theta}(\Tilde{\vect x}_N)}{\leq 0}
	\addConstraint{\Tilde{\vect x}_0}{= \Tilde{\vect s}_t}
	\addConstraint{\vect u_0}{= \vect a}
\end{mini!}
where $\Tilde{\vect f}$ is a deterministic model of the transition kernel $\Tilde{\mathcal{P}}$ and propagates the predicted augmented state $\Tilde{\vect x}_{k}$.
Then the state value function and the \ac{mpc} policy are,
\begin{align}
	V_{\vect \theta}(\Tilde{\vect s}_t) =& \min_{\vect a} Q_{\vect \theta}(\Tilde{\vect s}_t, \vect a), \label{eq:vmpc}\\
\vect \pi_{\vect\theta}(\Tilde{\vect s}_t) =& \arg\min_{\vect a} Q_{\vect \theta}(\Tilde{\vect s}_t, \vect a) = \vect u_0^\star. \label{eq:pi2}
\end{align}

We now show that the parameterized \ac{mpc} scheme in \eqref{eq:mpc4}--\eqref{eq:pi2} can represent the optimal value functions and policy of an \ac{mdp} with future information.

\begin{theorem} \label{th:rlmpc}
Consider the parameterized stage cost, terminal cost, and constraints in \eqref{eq:mpc4}--\eqref{eq:pi2} as function approximators with parameters $\vect\theta$. Let $\Tilde{\vect x}^{\star}=\{\Tilde{\vect x}^{\star}_k\}_{k=0}^N$ denote the predicted augmented state trajectory generated by the dynamics model $\Tilde{\vect f}$ under the optimal policy $\vect\pi^\star$.  

Then, there exist parameters $\vect\theta^\star$ such that:
\begin{enumerate}[label=(\arabic*), ref=(\arabic*)]
    \item \label{th:rlmpc:V}
    $V_{\vect\theta^\star}(\Tilde{\vect s}_t) = V^\star(\Tilde{\vect s}_t),\quad \forall\, \Tilde{\vect s}_t \in \Tilde{\mathcal X}$,
    \item \label{th:rlmpc:pi}
    $\vect\pi_{\vect\theta^\star}(\Tilde{\vect s}_t) = \vect\pi^\star(\Tilde{\vect s}_t),\quad \forall\, \Tilde{\vect s}_t \in \Tilde{\mathcal X}$,
    \item \label{th:rlmpc:Q}
    $Q_{\vect\theta^\star}(\Tilde{\vect s}_t,\vect a) = Q^\star(\Tilde{\vect s}_t,\vect a)$ for all $\Tilde{\vect s}_t \in \Tilde{\mathcal X}$ and all $\vect a \in \mathcal A$ such that  
    $\left| V^\star \left(\Tilde{\vect f}(\Tilde{\vect s}_t,\vect a)\right) \right| < \infty$.
\end{enumerate}
where
\begin{equation}
    \Tilde{\mathcal X} := \left\{
\Tilde{\vect s}_t \in \Tilde{\mathcal S} \;\middle|\;
\left| V^\star(\Tilde{\vect x}^\star_k) \right| < \infty \,\text{for all}\, k \le N\right\}.
\end{equation}
\end{theorem}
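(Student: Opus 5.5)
This mirrors the constructive argument of \citep[Theorem~1]{gros2019data}, now carried out on the augmented state $\Tilde{\vect s}$. The plan is to exhibit explicit parameters $\vect\theta^\star$, keeping the possibly inaccurate model $\Tilde{\vect f}$ fixed. Take the constraints $\vect h_{\vect\theta^\star}, \vect h^f_{\vect\theta^\star}$ to be void, or at least inactive on the relevant set. Set the terminal cost to $T_{\vect\theta^\star}(\Tilde{\vect x}) = V^\star(\Tilde{\vect x})$, and choose the stage cost as the Bellman correction
\begin{equation*}
L_{\vect\theta^\star}(\Tilde{\vect x},\vect u) = Q^\star(\Tilde{\vect x},\vect u) - \gamma V^\star\bigl(\Tilde{\vect f}(\Tilde{\vect x},\vect u)\bigr),
\end{equation*}
which is well defined wherever $V^\star(\Tilde{\vect f}(\Tilde{\vect x},\vect u))$ is finite. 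Since the augmented state $\Tilde{\vect s}$ is Markovian, $Q^\star$ and $V^\star$ are genuine functions of $(\Tilde{\vect s},\vect a)$ and $\Tilde{\vect s}$ respectively, obeying the Bellman equations on $\Tilde{\mathcal S}$. This is exactly where the augmentation is needed: without it, $L_{\vect\theta^\star}$ could not be written as a function of the MPC's predicted state alone.

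The key step is a telescoping identity. Along any model trajectory $\Tilde{\vect x}_{k+1}=\Tilde{\vect f}(\Tilde{\vect x}_k,\vect u_k)$ on which all $V^\star$ values are finite, add and subtract $\gamma^k V^\star(\Tilde{\vect x}_k)$ in each stage term. This rewrites the MPC cost as
\begin{equation*}
V^\star(\Tilde{\vect x}_0) + \sum_{k=0}^{N-1}\gamma^k A^\star(\Tilde{\vect x}_k,\vect u_k), \qquad A^\star := Q^\star - V^\star \ge 0 ,
\end{equation*}
because the terms $\gamma^{k+1}V^\star(\Tilde{\vect x}_{k+1})$ cancel successively against the terminal cost $\gamma^N V^\star(\Tilde{\vect x}_N)$. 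The advantage $A^\star$ is nonnegative, and it vanishes exactly when $\vect u_k=\vect\pi^\star(\Tilde{\vect x}_k)$.

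For the $Q$ statement, fix $\vect u_0=\vect a$. The term $A^\star(\Tilde{\vect s}_t,\vect a)$ is then fixed, and the remaining terms are minimized, with value zero, by following $\vect\pi^\star$ from $\Tilde{\vect x}_1=\Tilde{\vect f}(\Tilde{\vect s}_t,\vect a)$. This gives $Q_{\vect\theta^\star}(\Tilde{\vect s}_t,\vect a)=V^\star+A^\star=Q^\star(\Tilde{\vect s}_t,\vect a)$. The finiteness hypothesis $|V^\star(\Tilde{\vect f}(\Tilde{\vect s}_t,\vect a))|<\infty$ is what keeps the first correction term, and hence the cancellation, meaningful. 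Minimizing over $\vect a$ then yields $V_{\vect\theta^\star}=V^\star$ and $\vect\pi_{\vect\theta^\star}=\vect\pi^\star$, because $A^\star\ge 0$ with equality at $\vect\pi^\star$.

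I expect the main obstacle to be the bookkeeping of finiteness and domain issues, which is why the set $\Tilde{\mathcal X}$ is defined along the optimal model trajectory $\Tilde{\vect x}^\star$. On trajectories where some $V^\star(\Tilde{\vect x}_k)=\infty$, the telescoping is invalid, so such trajectories must be excluded without shrinking the set of optimal ones. The first thing to try is to encode this with the parameterized constraints, either by setting $L_{\vect\theta^\star}=+\infty$ off the finite-value domain or by taking $\vect h_{\vect\theta^\star}$ to enforce $V^\star(\Tilde{\vect x}_k)<\infty$. For $\Tilde{\vect s}_t\in\Tilde{\mathcal X}$ the optimal trajectory $\Tilde{\vect x}^\star$ remains feasible, so the minimum is still attained there. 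A secondary point is existence or attainment of the argmin, which I would assume as in \citep{gros2019data}.
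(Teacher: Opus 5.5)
Your proposal is the paper's proof written out in full. It uses the same $\vect\theta^\star$ (terminal cost $V^\star$, stage cost $Q^\star-\gamma V^\star\circ\Tilde{\vect f}$ set to $+\infty$ where $V^\star(\Tilde{\vect f}(\cdot,\cdot))$ is not finite), the same telescoping along the deterministic model trajectory, and the nonnegativity of $Q^\star-V^\star$, which is exactly what the paper delegates to \citep[Theorem~1]{gros2019data}.

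One caveat applies equally to the paper's statement. In item (3), the step where following $\vect\pi^\star$ from $\Tilde{\vect f}(\Tilde{\vect s}_t,\vect a)$ yields zero residual advantage requires $V^\star$ to stay finite along that $(N-1)$-step model rollout. Neither $\Tilde{\vect s}_t\in\Tilde{\mathcal X}$ nor $|V^\star(\Tilde{\vect f}(\Tilde{\vect s}_t,\vect a))|<\infty$ guarantees this when $\Tilde{\vect f}$ is inaccurate. Without that extra condition, your argument only gives $Q_{\vect\theta^\star}(\Tilde{\vect s}_t,\vect a)\ge Q^\star(\Tilde{\vect s}_t,\vect a)$.
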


\begin{pf}
    Consider $\vect\theta$ satisfying the following, 
\begin{subequations}\label{eq:theta1}
\begin{align}
L_{\vect\theta}(\Tilde{\vect s}_t,\vect a) &= \begin{cases}
Q^\star(\Tilde{\vect s}_t,\vect a) - \gamma V^+(\Tilde{\vect s}_t,\vect a), 
&\text{if } |V^+(\Tilde{\vect s}_t,\vect a)| < \infty, \\[2mm]
\infty, &\text{otherwise},
\end{cases} \\
T_{\vect\theta}(\Tilde{\vect s}_t) &= V^\star(\Tilde{\vect s}_t),
\end{align}
\end{subequations}
where $V^+(\Tilde{\vect s}_t,\vect a) = V^\star (\Tilde{\vect f}(\Tilde{\vect s}_t,\vect a))$.
Since $\Tilde{\vect f}$ is a deterministic model of $\Tilde{\mathcal{P}}$, the Bellman recursion unfolds as a telescoping sum along the predicted trajectory.  
The result then follows from \cite[Theorem~1]{gros2019data}. \hfill \qed
\end{pf}

Theorem~\ref{th:rlmpc} characterizes the existence of an \ac{mpc} parameterization that yields the optimal value functions and policy. 
The augmented state formulation is essential for this result: it allows the stage and terminal costs in \eqref{eq:theta1} to depend on the same information as the optimal value functions of the underlying \ac{mdp}. 
Consequently, suitable cost modifications can reproduce the Bellman recursion and compensate for inaccuracies in the prediction model, as in \citep{gros2019data}, for \acp{mdp} with future information.
% As an immediate consequence, we obtain the following.
% \begin{corollary}%[MPC as a function approximator]
% \label{cor:mpc_fa}
% Under the assumptions of Theorem~\ref{th:rlmpc}, a sufficiently parameterized \ac{mpc} scheme of the forms \eqref{eq:mpc4}-\eqref{eq:pi2} constitutes a valid function approximator to represent the optimal value functions and the optimal policy of an \ac{mdp} with future information. 
% In particular, when richly parameterized, there exist $\vect \theta^\star$ such that,
% \begin{align}
%     &V_{\vect \theta^\star}(\Tilde{\vect s}) = V^\star(\Tilde{\vect s}),\\ 
%     &\vect \pi_{\vect \theta^\star}(\Tilde{\vect s}) = \vect \pi^\star(\Tilde{\vect s}), \\
%     &Q_{\vect \theta^\star}(\Tilde{\vect s}, \vect a) = Q^\star(\Tilde{\vect s}, \vect a),
% \end{align}
% over the set $\Tilde{\mathcal{X}}$.
% \end{corollary}
% Comments on the theorem
% Theorem~\ref{th:rlmpc} establishes that parameterized \ac{mpc}s of the forms \eqref{eq:mpc4}-\eqref{eq:pi2} can represent the optimal value functions and the optimal policy of an \ac{mdp} with future information. 
This result implies several requirements on how the \ac{mpc} formulation must be constructed:
\begin{itemize}[leftmargin=*]
    \item An \ac{mpc} scheme must operate on the augmented state:
    Since future information influences optimal decisions, the \ac{mpc} components must all be defined over $\tilde{\vect s}$, motivating the \ac{mpc} formulation in \eqref{eq:mpc4}.
    In contrast to \eqref{eq:mpc3}, \eqref{eq:mpc4} thus aligns with the Bellman structure of an \ac{mdp} with future information and can, in principle, represent its optimal value functions and policies. 
    % This may require a blending of the future information in time at every stage of the \ac{mpc} horizon, unlike the simpler formulation in \eqref{eq:mpc3}. %, which uses only the $k$-step-ahead forecast $z_k$.

    \item Optimality is defined in the augmented state space:
    The Bellman operator is well-posed only when defined over $\Tilde{\vect s}$. Any \ac{mpc} or \ac{rl} architecture that ignores the forecast component of the state violates the Markov property and hence cannot recover $V^\star$, $Q^\star$, or $\vect \pi^\star$ in theory.

    \item Value function approximators must use the augmented state:
    Any critic must approximate $V^{\vect \pi}(\Tilde{\vect s})$ or $Q^{\vect \pi}(\Tilde{\vect s}, \vect a)$, rather than functions of the physical state alone. %not their counterparts defined on the physical state alone.

    \item Future information must be propagated consistently:
    To match the Bellman recursion, the \ac{mpc} prediction model must evolve the full augmented state, updating $\vect Z^H_{t,k}$ to $\vect Z^H_{t,k+1}$ as in \eqref{eq:mpc4:2}. 
    A simple strategy is to shift a precomputed sequence of future information.
    % To match the Bellman recursion, the \ac{mpc} prediction model must evolve the full augmented state, including the future information sequence $\vect Z^H_{t, k}$, as in \eqref{eq:mpc4:2}. 
    % In particular, the transition from stage $k$ to $k+1$ must update $\vect Z^H_{t,k}$ to the next sequence $\vect Z^H_{t,k+1}$.
    % A simple strategy is to shift a precomputed future information sequence. %, so that $\vect Z^H_{t,k+1} = [\vect z_{t,k+1},\dots,\vect z_{t,k+H+1}]$.
    % Practical implementations are discussed in Section~\ref{sec:mpc_fn_approx}.
    
    % \item Length of future information vs.\ the \ac{mpc} horizon:
    % The future-information window $H$ and the \ac{mpc} horizon $N$ are independent.
    % For $H>0$, the formulation in \eqref{eq:mpc4} can exploit information beyond the current \ac{mpc} stage, and possibly beyond the \ac{mpc} horizon itself.
    % If propagation is implemented by shifting a precomputed future-information sequence, the sequence must be long enough to supply all stage-wise windows; using a window of length $H$ over a horizon $N$ requires information up to approximately $N+H$ steps ahead. % maybe not necessary
\end{itemize}

In summary, Theorem~\ref{th:rlmpc} shows that a sufficiently parameterized \ac{mpc} can represent the optimal value functions and policy of an \ac{mdp} with future information, provided it operates on the augmented state and propagates forecast information through its horizon.  
Although finding the optimal parameters is as hard as solving the \ac{mdp} itself, this result motivates treating \acp{mpc} in \eqref{eq:mpc4}--\eqref{eq:pi2} as function approximators. 
Section~\ref{sec:mpc_fn_approx} presents a practical implementation and racing example.

\section{MPC as a function approximator}\label{sec:mpc_fn_approx}
% This section presents a practical \ac{mpc} formulation for \acp{mdp} with future information and its use as a function approximator within an \ac{rl} pipeline.
This section presents a decomposed augmented-state \ac{mpc} formulation for \acp{mdp} with future information and its use as a function approximator within an \ac{rl} pipeline.

% In \eqref{eq:mpc4}, the prediction model $\Tilde{\vect f}$ acts on the augmented state. 
In practice, the evolution of future information is often independent of the \ac{mpc} optimization variables. 
This motivates decomposing $\Tilde{\vect f}$ in \eqref{eq:mpc4} into the physical state dynamics and the future information update, represented by $\vect f_{\vect\theta}$ and $\vect g_{\vect\theta}$, respectively. 
We parameterize both components to provide \ac{rl} with additional flexibility to tune $\vect\theta$ for achieving \ref{th:rlmpc:V}--\ref{th:rlmpc:Q} of Theorem~\ref{th:rlmpc}. 
We then formulate a parameterized \ac{mpc} scheme as,
\begin{mini!}|s|
	{\vect u, \vect x}
	{\gamma^N T_{\vect \theta}(\vect x_N, \vect Z_{N}) + \sum_{k=0}^{N-1} \gamma^k L_{\vect \theta}(\vect x_k, \vect u_k, \vect Z_{k})}
	{\label{eq:mpc5}}
	{Q_{\vect \theta}(\Tilde{\vect s}_t, \vect a) =}
	\addConstraint{\vect x_{k+1}}{= \vect f_{\vect \theta}(\vect x_k, \vect u_k, \vect Z_{k})}
	\addConstraint{\vect Z_{k+1}}{= \vect g_{\vect \theta}(\vect Z_{k})}
	\addConstraint{\vect h_{\vect \theta}(\vect x_k, \vect u_k, \vect Z_{k})}{\leq 0}
	\addConstraint{\vect h^f_{\vect \theta}(\vect x_N, \vect Z_{N})}{\leq 0}
	\addConstraint{[\vect x_0, \vect Z_{0}]}{= \Tilde{\vect s}_t}
	\addConstraint{\vect u_0}{= \vect a}
\end{mini!}
where $\vect Z_k := \vect Z^H_{t,k}$ within the \ac{mpc} horizon, for compactness. 
Then the state value function and policy are,
\begin{align}
	V_{\vect \theta}(\Tilde{\vect s}_t) =& \min_{\vect a} Q_{\vect \theta}(\Tilde{\vect s}_t, \vect a), \label{eq:vmpc2}\\
\vect \pi_{\vect\theta}(\Tilde{\vect s}_t) =& \arg\min_{\vect a} Q_{\vect \theta}(\Tilde{\vect s}_t, \vect a) = \vect u_0^\star. \label{eq:pi3}
\end{align}

The formulation in \eqref{eq:mpc5} is a structured instance of \eqref{eq:mpc4}. 
Theorem~\ref{th:rlmpc} does not require $\Tilde{\vect f}$ to match the true transition kernel $\Tilde{\mathcal P}$ exactly; hence, it remains applicable even when $\vect f_{\vect\theta}$ and $\vect g_{\vect\theta}$ are approximate or learned models. 
In particular, with suitable cost modifications of the form \eqref{eq:theta1}, an \ac{mpc} of the form \eqref{eq:mpc5} can still represent the optimal value functions and policy of the \ac{mdp} with future information.

This decomposition is useful in practice as the future information update can often be handled outside the optimization problem. 
For example, $\vect g_{\vect\theta}$ may simply shift a precomputed future information sequence, so that each stage receives the appropriate sequence $\vect Z^H_{t,k}$. 
Alternatively, $\vect g_{\vect\theta}$ may be parameterized and tuned together with the rest of the \ac{mpc} scheme using \ac{rl}. 
% Thus, \eqref{eq:mpc5} retains the augmented state structure required by Theorem~\ref{th:rlmpc}, while avoiding explicit optimization over the future information dynamics.
% Thus, \eqref{eq:mpc5} retains the augmented-state structure required by Theorem~\ref{th:rlmpc}, while allowing practical implementations that avoid explicitly optimizing over the future-information dynamics.

Next, we describe how the parameterized \acp{mpc} can be embedded within an \ac{rl} pipeline.

\subsection{RL and MPC sensitivities}
Learning a parameterized \ac{mpc} with \ac{rl} requires gradients of the value functions and policy with respect to $\vect\theta$. 
For the \ac{mpc} schemes in \eqref{eq:mpc5}-\eqref{eq:pi3}, these gradients are obtained through standard \ac{mpc} sensitivity analysis by differentiating the Lagrangian and KKT conditions; see \citep{gros2019data} for details.
Here, the sensitivities are computed for \acp{mpc} defined over the augmented state $\Tilde{\vect s}$.
Consequently, the gradients account for the dependence of the value functions and policy on both the physical state and the available future information.

These sensitivities allow the parameterized \ac{mpc} to be embedded in standard \ac{rl} updates. 
For example, in a policy-gradient method such as \ac{dpg}~\citep{silver2014deterministic}, the \ac{mpc} in \eqref{eq:pi2} directly serves as a policy and is updated using,
\begin{equation}
    \vect \theta \leftarrow \vect \theta + \alpha \nabla_{\vect \theta} \vect \pi_{\vect \theta}(\Tilde{\vect s}) \nabla_{\vect a} Q^{\vect \pi_{\vect \theta}}(\Tilde{\vect s}, \vect a)|_{\vect a = \vect \pi_{\vect \theta}(\Tilde{\vect s})},
\end{equation}
where $Q^{\vect \pi_{\vect \theta}}$ is estimated separately.
In stochastic policy-gradient methods~\citep{sutton1999policy}, the same construction can be used by inducing a stochastic policy, for instance by randomly perturbing the \ac{mpc} cost objective as in \citep{gros2021reinforcement} or by simply sampling actions from a Gaussian distribution centered at the \ac{mpc} output.

The \ac{mpc} schemes in \eqref{eq:mpc5}--\eqref{eq:pi3} incorporate future information into learning and decision-making through their structured optimization problems.
Since this information enters through the \ac{mpc} cost, dynamics, and constraints, the resulting policy can exploit future information meaningfully even before learning.
This provides a useful inductive bias compared with generic policy approximators that receive the augmented state directly and must infer the relevant structure from data.

In actor--critic methods, a separate value function approximator is still required.
Neural networks remain a natural choice for this role due to their expressiveness, but they must approximate $V^{\vect \pi}(\Tilde{\vect s})$ or $Q^{\vect \pi}(\Tilde{\vect s}, \vect a)$ over the augmented state.
Thus, while the \ac{mpc} structure provides a strong inductive bias for the policy, value-function learning may still suffer from the increased dimensionality of $\Tilde{\vect s}$.

\subsection{Illustrative example}
We evaluate the proposed \ac{mpc} in \eqref{eq:pi3} on a point-mass racing task. % to demonstrate its capabilities as a function approximator for an \ac{mdp} with future information. 
In this task, a sequence of future reference positions (the track centerline) is available at decision time and constitutes the future information component of the augmented state. 
We deploy the parameterized \ac{mpc} within a \ac{ppo}~\citep{schulman2017proximal} training loop and show that the learned policy exploits this future information to achieve high-performing closed-loop behavior.

\subsubsection{Point-Mass Racing Environment:}
The controlled system is a planar point-mass with physical state, $\vect{s} = [p_x, p_y, v_x, v_y]^\top$ where $\vect p = (p_x,p_y)$ are positions and $\vect v = (v_x,v_y)$ are velocities.  
The control input is the applied force $\vect{a} = [F_x, F_y]^\top$. 
The dynamics follow a discretized double-integrator with Gaussian noise,
\begin{equation*}
    \vect{s}_{t+1} = A\, \vect{s}_t + B\, \vect{a}_t + \mathcal{N}(0, 10^{-3}I),
\end{equation*}
with
\begin{equation*}
A=\begin{bmatrix}
1 & 0 & 0.1 & 0\\
0 & 1 & 0 & 0.1\\
0 & 0 & 0.9 & 0\\
0 & 0 & 0 & 0.9
\end{bmatrix},\,
B=\begin{bmatrix}
0 & 0\\
0 & 0\\
0.1 & 0\\
0 & 0.1
\end{bmatrix}.
\end{equation*}

The goal is to traverse a racing track as quickly as possible (as shown in fig. \ref{fig:traj}). The environment provides a sequence of future reference positions, $\vect{Z}^M_{t, 0} = [\, \vect z_{t, 0},\, \dots,\, \vect z_{t, M}]$, where $\vect z_{t, 0} \in \mathbb{R}^2$ denotes the track centerline closest to $\vect s_t$ with $M=60$.
Here, we keep the future information length at $H=50$ (i.e., $\tilde{\vect s}_t = [\vect s_t, \vect Z^{50}_{t, 0}]$) with the rest used to shift $\vect Z^H_{t, k}$ over the optimization horizon. 
% The first sequence constitutes the future information component of the augmented state.

The task is encoded through the stage cost,
% \begin{align}\label{eq:reward}
%     L(\tilde{\vect{s}}_t,\vect{a}_t) =& -\mathrm{prog}(\vect{s}_t) + 0.01 \| \vect p_t - \vect z_{0,t} \|^2  \nonumber \\
%     & + 0.01 \| \vect{v}_t \|^2 + 0.01 \| \vect{a}_t \|^2,
% \end{align}
\begin{align}\label{eq:reward}
    L(\tilde{\vect{s}}_t,\vect{a}_t) =& -\mathrm{prog}(\vect{s}_t),
\end{align}
where $\mathrm{prog}(\cdot)$ measures forward progress along the track. % and $\vect z_t$ is the closest reference point.  
Exceeding the track half-width of $0.4$ terminates the episode. % and yields a penalty of $10 \| \vect p_t - \vect z_{0,t} \|^2$.
% Note that, in \eqref{eq:reward}, the stage cost is given as a reward to maximize, in line with \ac{rl} convention. 
Additionally, the discount factor $\gamma$ is set to $0.99$.

\subsubsection{MPC formulation:}
We adopt the parameterized \ac{mpc} policy in \eqref{eq:pi3} (with $N = 10$).  
% At each stage $k$, the future information sequence $\vect Z^H_{t,k}$ is constructed by shifting. %; thus, the evolution of future information is handled externally.
For demonstration, we parameterize only the \ac{mpc} stage cost as,
\begin{align*}
    L_{\vect \theta}(\vect{x}_k,\vect{u}_k,\vect{Z}^H_{t, k})
    &= W_{\vect p}\| \vect p_k - \vect Z^H_{t, k} \, \vect w_k \|^2 \\
     &\quad + W_{\vect v} \| \vect{v}_k \|^2  + W_{\vect u} \| \vect{u}_k \|^2 ,
\end{align*}
% where $\vect w_k \in \mathbb{R}^{H+1}$ is a weight vector forming a weighted combination of columns of $\vect Z^H_{t, k}$, i.e., $\vect Z^H_{t, k} \vect w_k = \sum_{i=0}^{H} w_{k, i} \, \vect z_{t, k+i}$.
% where $\vect w_k \in \mathbb{R}^{H+1}$ forms a latent reference $\vect z^{\mathrm{ref}}_k = \vect Z^H_{t,k}\vect w_k$ from the future information sequence $\vect Z^H_{t,k}$.
where $\vect w_k \in \mathbb{R}^{H+1}$ weights the columns of $\vect Z^H_{t,k}$ to form the latent reference $\vect Z^H_{t,k}\vect w_k$, i.e., $\vect Z^H_{t, k} \vect w_k = \sum_{i=0}^{H} w_{k, i} \, \vect z_{t, k+i}$.
Note that the term $\vect Z^H_{t, k} \vect w_{k}$ acts as a latent reference and can be tuned by updating $\vect w_k$, enabling the \ac{mpc} to exploit future information more flexibly.
Then, $\vect \theta = \{ W_{\vect p}, W_{\vect z}, W_{\vect v}, W_{\vect u} \}$ with $W_{\vect z}=[\vect w_0, \ldots, \vect w_N]^\top$ forms the parameter vector to be learned with \ac{ppo}. 
% This allows the policy to learn track-shaping behaviors naturally from data.

We compare three controllers:
\begin{enumerate}
    \item Nominal \ac{mpc}: A baseline \ac{mpc} controller using the naive formulation in \eqref{eq:mpc3} with fixed cost weights.
    \item \ac{ppo}–\ac{mpc}: A parameterized \ac{mpc} with only the cost weights, $\{ W_{\vect p}, W_{\vect v}, W_{\vect u} \}$, learned using \ac{ppo}, and naive future information usage, similar to \eqref{eq:mpc3}.
    \item \ac{ppo}–\ac{mpc}i: The proposed parameterized \ac{mpc} in \eqref{eq:pi3} with $\vect \theta = \{ W_{\vect p}, W_{\vect z}, W_{\vect v}, W_{\vect u} \}$ learned using \ac{ppo}. % to learn its own latent track representation.
\end{enumerate}

A stochastic policy is formed by sampling actions from a Gaussian distribution centered around the \ac{mpc} output,
\begin{equation*}
    \vect \pi^\prime_{\vect \theta}(\Tilde{\vect s}_t)
    = \mathcal{N}\!\left(
        \vect{\pi}_{\vect \theta}(\Tilde{\vect s}_t),
        \vect \sigma^2 I
    \right),
\end{equation*}
with $\vect \sigma$ as the standard deviation.

\subsubsection{PPO Training:}
The actor is an \ac{mpc}-based stochastic policy $\vect \pi^\prime_{\vect \theta}$, while the critic is a two-layer neural network with $128$ units per layer, trained to approximate $V^{\vect \pi^\prime_{\vect \theta}}(\tilde{\vect{s}})$.
The \ac{ppo} objective employs clipped likelihood ratios and generalized advantage estimation.  
Training is conducted for $1\times 10^4$ time steps with minibatch SGD.

\subsubsection{Discussion:}

\begin{figure}[t]
\begin{subfigure}{4.35cm}
    \centering
    \includegraphics[width=\textwidth]{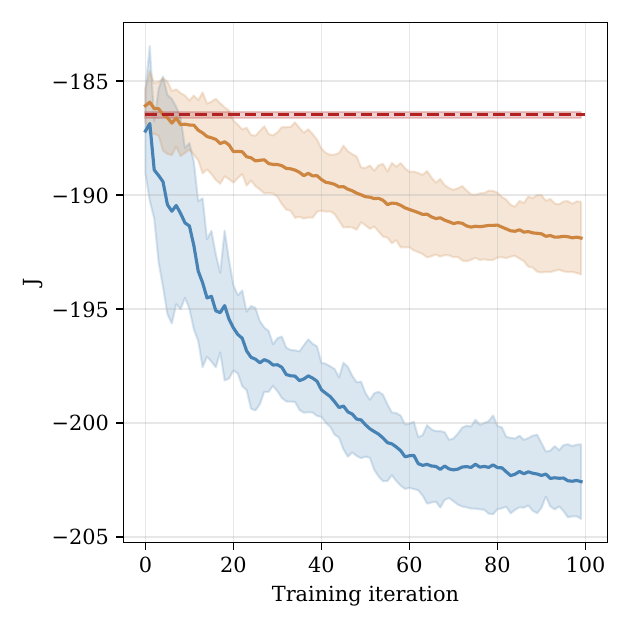}    % The printed column width is 8.4 cm.
    \caption{Cumulative Cost}
    \label{fig:convergence}
\end{subfigure}
\begin{subfigure}{4.35cm}
    \centering
    \includegraphics[width=\textwidth]{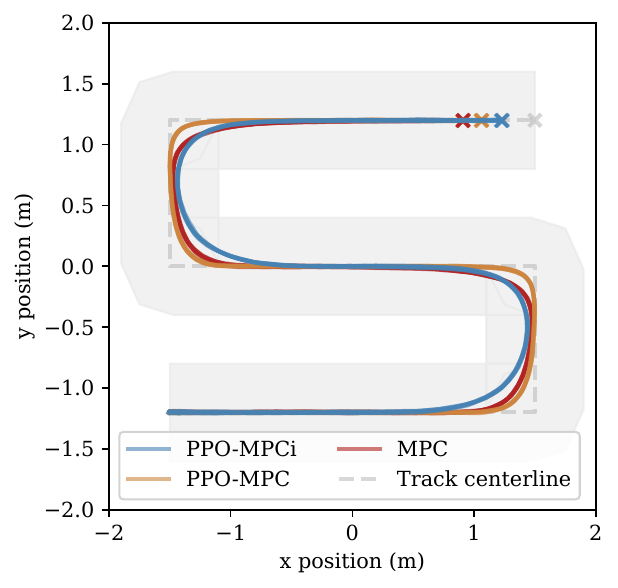}    % The printed column width is 8.4 cm.
    \caption{State Trajectory}
    \label{fig:traj}
\end{subfigure}
\caption{Closed-loop performance and state trajectory for (i) \ac{mpc}, (ii) \ac{ppo}-\ac{mpc}, and (iii) \ac{ppo}-\ac{mpc}i} 
%\caption{Cumulative return and closed-loop behavior for (i) nominal \ac{mpc}, (ii) \ac{ppo}-\ac{mpc}: \ac{mpc} learned with \ac{ppo} without blending $\vect d_{k,H}$, and (iii) \ac{ppo}-\ac{mpc}i: \ac{mpc} learned with \ac{ppo} with blending $\vect d_{k,H}$} 
\label{fig:overall}
\end{figure}

% Fig.~1a shows the cumulative cost during training, averaged over five seeds.
% PPO-MPCi achieves the lowest cost, followed by PPO-MPC, while the nominal MPC performs worst.
Fig.~\ref{fig:convergence} shows the cumulative cost during training, averaged over five seeds.  
\ac{ppo}-\ac{mpc}i achieves the lowest cost, followed by \ac{ppo}-\ac{mpc}, while the nominal \ac{mpc} performs worst.
The trajectories in Fig.~\ref{fig:traj} show that nominal \ac{mpc} follows the track conservatively, whereas \ac{ppo}-\ac{mpc} learns more aggressive tracking and reaches farther along the track, as indicated by the $\times$ markers.
In contrast, \ac{ppo}-\ac{mpc}i learns a latent reference through $\vect Z^H_{t,k}\vect w_k$, allowing it to exploit the future reference information sequence more flexibly and cut corners to increase progress.

The latent representation used here is only a weighted average over the future reference information sequence, yet already yields a clear performance gain.
In more complex domains, richer latent representations, such as variational autoencoders~\citep{Kingma_2019}, could compress long-horizon future information into low-dimensional features suitable for \ac{mpc}.
Such settings may also require alternative critic architectures, since the augmented state can become high-dimensional.

\section{Conclusion}\label{sec:conclusion}
This work establishes conditions under which a parameterized \ac{mpc} scheme can represent the optimal value functions and optimal policy of an \ac{mdp} with future information. 
In particular, we show that future information must be incorporated through the augmented state and propagated through the optimization horizon for the \ac{mpc} to align with the Bellman structure of the underlying \ac{mdp}. 
This formulation naturally enables \ac{mpc} to serve as a structured function approximator within an \ac{rl} pipeline.
We further demonstrate, through a racing example, that \ac{rl} can learn the \ac{mpc} parameters in practice and that the learned \ac{mpc} exploits future information more effectively than naive use of future information. 
Overall, the results highlight \ac{mpc} as an effective structured function approximator for decision problems with future information.

% This work establishes conditions under which a parameterized \ac{mpc} scheme can represent the optimal value functions and optimal policy of an \ac{mdp} with future information. 
% In particular, we show that future information must be incorporated through the augmented state and propagated through the optimization horizon for the \ac{mpc} to align with the Bellman structure of the underlying \ac{mdp}. 
% With sufficient parameterization, the \ac{mpc} scheme can therefore serve as a structured function approximator within an \ac{rl} pipeline.
% We further demonstrate, through a racing example, that \ac{rl} can learn the \ac{mpc} parameters in practice and that the learned \ac{mpc} exploits future information more effectively than naive use of future information. 
% Overall, the results highlight \ac{mpc} as an effective structured function approximator for decision problems with future information.

% \begin{ack}
% Place acknowledgments here.
% \end{ack}

\section*{DECLARATION OF GENERATIVE AI AND AI-ASSISTED TECHNOLOGIES IN THE WRITING PROCESS}
During the preparation of this work, the authors used ChatGPT to refine the text. After using this tool/service, the authors reviewed and edited the content as needed and take full responsibility for the content of the publication.

\bibliography{references}             % bib file to produce the bibliography

@book{rawlings2017model,
  title={Model predictive control: theory, computation, and design},
  author={Rawlings, James Blake and Mayne, David Q and Diehl, Moritz},
  volume={2},
  year={2017},
  publisher={Nob Hill Publishing Madison, WI}
}

@book{sutton2018reinforcement,
  title={Reinforcement learning: An introduction},
  author={Sutton, Richard S and Barto, Andrew G},
  year={2018},
  publisher={MIT press}
}

@article{gros2019data,
  title={Data-driven economic NMPC using reinforcement learning},
  author={Gros, S{\'e}bastien and Zanon, Mario},
  journal={IEEE Transactions on Automatic Control},
  volume={65},
  number={2},
  pages={636--648},
  year={2019},
  publisher={IEEE}
}

@ARTICLE{hewing2020gpmpc,
  author={Hewing, Lukas and Kabzan, Juraj and Zeilinger, Melanie N.},
  journal={IEEE Transactions on Control Systems Technology}, 
  title={Cautious Model Predictive Control Using Gaussian Process Regression}, 
  year={2020},
  volume={28},
  number={6},
  pages={2736-2743}
}

@InProceedings{wabersich2020bmpc,
  title = 	 {Bayesian model predictive control: Efficient model exploration and regret bounds using posterior sampling},
  author =       {Wabersich, Kim Peter and Zeilinger, Melanie},
  booktitle = 	 {Proceedings of the 2nd Conference on Learning for Dynamics and Control},
  pages = 	 {455--464},
  year = 	 {2020},
  volume = 	 {120},
  series = 	 {Proceedings of Machine Learning Research},
  month = 	 {10--11 Jun},
  publisher =    {PMLR},
}

@article{sutton1999policy,
  title={Policy gradient methods for reinforcement learning with function approximation},
  author={Sutton, Richard S and McAllester, David and Singh, Satinder and Mansour, Yishay},
  journal={Advances in neural information processing systems},
  volume={12},
  year={1999}
}

@inproceedings{silver2014deterministic,
  title={Deterministic policy gradient algorithms},
  author={Silver, David and Lever, Guy and Heess, Nicolas and Degris, Thomas and Wierstra, Daan and Riedmiller, Martin},
  booktitle={International conference on machine learning},
  pages={387--395},
  year={2014},
  organization={PMLR}
}

@article{hewing2020learning,
  title={Learning-based model predictive control: Toward safe learning in control},
  author={Hewing, Lukas and Wabersich, Kim P and Menner, Marcel and Zeilinger, Melanie N},
  journal={Annual Review of Control, Robotics, and Autonomous Systems},
  volume={3},
  number={1},
  pages={269--296},
  year={2020},
  publisher={Annual Reviews}
}

@INPROCEEDINGS{mesbah2022fusion,
	author={Mesbah, Ali and Wabersich, Kim P. and Schoellig, Angela P. and Zeilinger, Melanie N. and Lucia, Sergio and Badgwell, Thomas A. and Paulson, Joel A.},
	booktitle={2022 American Control Conference (ACC)}, 
	title={Fusion of Machine Learning and MPC under Uncertainty: What Advances Are on the Horizon?}, 
	year={2022},
	volume={},
	number={},
	pages={342-357},
}

@article{anand2024data,
  title={Data-driven predictive control and MPC: Do we achieve optimality?},
  author={Anand, Akhil S and Sawant, Shambhuraj and Reinhardt, Dirk and Gros, Sebastien},
  journal={IFAC-PapersOnLine},
  volume={58},
  number={15},
  pages={73--78},
  year={2024},
  publisher={Elsevier}
}

@incollection{reinhardt2025economic,
  title={Economic model predictive control as a solution to markov decision processes},
  author={Reinhardt, Dirk and Anand, Akhil S and Sawant, Shambhuraj and Gros, S{\'e}bastien},
  booktitle={Model Predictive Control: Engineering Methods for Economists},
  pages={157--189},
  year={2025},
  publisher={Springer}
}

@article{anand2025all,
  title={All AI Models are Wrong, but Some are Optimal},
  author={Anand, Akhil S and Sawant, Shambhuraj and Reinhardt, Dirk and Gros, Sebastien},
  journal={arXiv preprint arXiv:2501.06086},
  year={2025}
}

@article{schulman2017proximal,
  title={Proximal policy optimization algorithms},
  author={Schulman, John and Wolski, Filip and Dhariwal, Prafulla and Radford, Alec and Klimov, Oleg},
  journal={arXiv preprint arXiv:1707.06347},
  year={2017}
}

@misc{reiter2025synthesismodelpredictivecontrol,
      title={Synthesis of Model Predictive Control and Reinforcement Learning: Survey and Classification}, 
      author={Rudolf Reiter and Jasper Hoffmann and Dirk Reinhardt and Florian Messerer and Katrin Baumgärtner and Shamburaj Sawant and Joschka Boedecker and Moritz Diehl and Sebastien Gros},
      year={2025},
      eprint={2502.02133},
      archivePrefix={arXiv},
      primaryClass={eess.SY},
      url={https://arxiv.org/abs/2502.02133}, 
}

@book{puterman2014markov,
  title={Markov decision processes: discrete stochastic dynamic programming},
  author={Puterman, Martin L},
  year={2014},
  publisher={John Wiley \& Sons}
}

@article{Kingma_2019,
   title={An Introduction to Variational Autoencoders},
   volume={12},
   ISSN={1935-8245},
   number={4},
   journal={Foundations and Trends in Machine Learning},
   publisher={Emerald},
   author={Kingma, Diederik P. and Welling, Max},
   year={2019},
   pages={307–392} }

@inproceedings{romero2024actor,
  title={Actor-critic model predictive control},
  author={Romero, Angel and Song, Yunlong and Scaramuzza, Davide},
  booktitle={2024 IEEE International Conference on Robotics and Automation (ICRA)},
  pages={14777--14784},
  year={2024},
  organization={IEEE}
}

@article{de2021incorporating,
  title={On incorporating forecasts into linear state space model Markov decision processes},
  author={de Chalendar, Jacques A and Glynn, Peter W},
  journal={Philosophical Transactions of the Royal Society A},
  volume={379},
  number={2202},
  pages={20190430},
  year={2021},
  publisher={The Royal Society Publishing}
}

@book{powell2019rlso,
  title={Reinforcement Learning and Stochastic Optimization: A Unified Framework for Sequential Decisions},
  author={Powell, Warren B.},
  year={2019},
  publisher={John Wiley \& Sons}
}

@inproceedings{gros2021reinforcement,
  title={Reinforcement learning based on MPC and the stochastic policy gradient method},
  author={Gros, S{\'e}bastien and Zanon, Mario},
  booktitle={2021 American Control Conference (ACC)},
  pages={1947--1952},
  year={2021},
  organization={IEEE}
}
                                                     % with bibtex (preferred)
                                                   
%\begin{thebibliography}{xx}  % you can also add the bibliography by hand

%\bibitem[Able(1956)]{Abl:56}
%B.C. Able.
%\newblock Nucleic acid content of microscope.
%\newblock \emph{Nature}, 135:\penalty0 7--9, 1956.

%\bibitem[Able et~al.(1954)Able, Tagg, and Rush]{AbTaRu:54}
%B.C. Able, R.A. Tagg, and M.~Rush.
%\newblock Enzyme-catalyzed cellular transanimations.
%\newblock In A.F. Round, editor, \emph{Advances in Enzymology}, volume~2, pages
%  125--247. Academic Press, New York, 3rd edition, 1954.

%\bibitem[Keohane(1958)]{Keo:58}
%R.~Keohane.
%\newblock \emph{Power and Interdependence: World Politics in Transitions}.
%\newblock Little, Brown \& Co., Boston, 1958.

%\bibitem[Powers(1985)]{Pow:85}
%T.~Powers.
%\newblock Is there a way out?
%\newblock \emph{Harpers}, pages 35--47, June 1985.

%\bibitem[Soukhanov(1992)]{Heritage:92}
%A.~H. Soukhanov, editor.
%\newblock \emph{{The American Heritage. Dictionary of the American Language}}.
%\newblock Houghton Mifflin Company, 1992.

%\end{thebibliography}

% \appendix
% \section{A summary of Latin grammar}    % Each appendix must have a short title.
% \section{Some Latin vocabulary}              % Sections and subsections are supported  
                                                                         % in the appendices.
\end{document}